\theoremstyle{theorem}
\newtheorem{thm}{Theorem}
\newtheorem{prop}[thm]{Proposition}
\theoremstyle{definition}
\theoremstyle{remark}
\newtheorem{rem}[thm]{Remark}
\newcommand{\R}{\mathbb R}
\newcommand{\C}{\mathbb C}
\newcommand{\N}{\mathbb N}
\renewcommand{\H}{\mathcal{H}}
\DeclareMathOperator{\Gr}{Gr_{res}}
\newcommand{\ur}{\mathfrak u_{\textrm{res}}(\H)}
\newcommand{\urp}{\mathfrak u_\textrm{res}^1(\H)}
\DeclareMathOperator{\Tr}{Tr}
\DeclareMathOperator{\im}{im}
\newcommand{\abs}[1]{\left\vert#1\right\vert}
\renewcommand{\to}{\rightarrow}
\newcommand{\be}{\begin{equation}}
\newcommand{\ee}{\end{equation}}
\begin{document}

\title[Integrable system on partial isometries]{Integrable system on partial isometries:\\ a finite dimensional picture}
\author[T.~Goli\'nski]{Tomasz Goli\'nski}
\address{University of Bia\l ystok\\ Cio\l kowskiego 1M\\15-245 Bia\l ystok\\ Poland}
\email{tomaszg@math.uwb.edu.pl}
\author[A.B.~Tumpach]{Alice Barbora Tumpach}
\address{\\
\begin{minipage}{0.5\textwidth}
UMR CNRS 8524\\
UFR de Math\'ematiques\\
Laboratoire Paul Painlev\'e\\
59 655 Villeneuve d'Ascq Cedex\\ France \\
\end{minipage}
\begin{minipage}{0.5\textwidth}
Institut CNRS Pauli\\ UMI  CNRS 2842\\ Oskar-Morgenstern-Platz 1 \\1090 Wien\\Austria \\
\end{minipage}
}
\email{alice-barbora.tumpach@univ-lille.fr}
\thanks{This research was partially supported by joint National Science Centre, Poland (number 2020/01/Y/ST1/00123) and Fonds zur F\"orderung der wissenschaftlichen Forschung, Austria (number I 5015-N) grant ``Banach Poisson--Lie groups and integrable systems''. The authors would like to thank the Erwin Schr\"odinger Institute for its hospitality during the thematic programme ``Geometry beyond Riemann: Curvature and Rigidity''.}

\begin{abstract}
The aim of the paper is to present the integrable systems on partial isometries which are related to the restricted Grassmannian in finite dimensional context. Some explicit solutions are obtained.
\end{abstract}
\subjclass{70H06,34A05,47B99,53D17}
\keywords{partial isometries, Magri method, integrable systems, restricted Grassmannian}

\maketitle

\section{Introduction}

This paper deals with a certain hierarchy of integrable bihamiltonian systems constructed on Banach Lie--Poisson spaces related to the restricted Grassmannian $\Gr$. This hierarchy was introduced in the paper \cite{GO-grass}, and its further properties were studied in \cite{GO-grass2,GT-momentum,GO-grass-bial}. The background on the restricted Grassmannian and related objects can be found e.g. in \cite{segal, wurzbacher, Ratiu-grass, tumpach-bruhat, OR, Oext}.

In essence one deals with infinite dimensional differential geometry to obtain this system, which leads to certain technical difficulties. In the papers \cite{Oext,Ratiu-grass} the structure of Banach Lie--Poisson spaces (the notion was introduced in \cite{OR}) related to the restricted Grassmannian were studied.
In the paper \cite{GO-grass} the hierarchy of Hamiltonian integrable systems was constructed on the Banach Lie--Poisson space $\urp$, which is a predual space to the Banach Lie algebra $\ur$. It turns out that these systems descend to the space $L^2(\H_-,\H_+)$, where $\H_-$ and $\H_+$ are complex Hilbert subspaces (finite or infinite dimensional) of a complex separable Hilbert space $\H=\H_+\oplus\H_-$ and $L^2$ denotes the ideal of Hilbert--Schmidt operators. Subsequently it was demonstrated in \cite{GT-momentum} that under a certain condition this system can be further reformulated on the set of partial isometries acting from the Hilbert space $\H_+$ to $\H_-$. 

However for the sake of this paper we will restrict our attention to the finite dimensional case. It will significantly simplify the situation. We will omit the definition of underlying objects (Banach Lie groups, Banach Lie algebras and Banach Lie--Poisson spaces) and present the system only on the space of $N\times M$ matrices. We illustrate the situation by finding a solution in the case of rank one partial isometries. The description of solution in the case of higher rank partial isometry remains the subject of further study. Another finite dimensional approach to this hierarchy has been studied in \cite{GO-grass2}, where the relationship with multimode nonlinear optical systems was discussed.

\section{Presentation of the commuting equations of motion}
\subsection{Equations on the space of skew-hermitian operators in $\H$}
We begin by considering a polarized Hilbert space, i.e. a Hilbert space $\H$ with a chosen orthogonal decomposition $\H=\H_+\oplus\H_-$. 
In order to obtain a hierarchy of commuting equations of motion we need to introduce the following notation. Let $P_\pm$ be the orthogonal projector on $\H_\pm$ and consider
the space of Hilbert--Schmidt operators $L^2(\H_-,\H_+)$.
However in finite dimensional setting we can identify $L^2(\H_-,\H_+)$ with the set of $N\times M$ complex matrices $Mat_{N\times M}(\C)$, where $\dim\H_-=M$ and $\dim\H_+=N$.

The hamiltonian systems under consideration were first obtained on a certain space of skew-hermitian operators in $\H$ which can be identified in this paper with the set of all skew-hermitian $(N+M)\times(N+M)$ matrices.
A skew-hermitian matrix $\mu$ will be written in the block form consistent with the decomposition of $\H$ as
\begin{equation}
  \mu:=\begin{pmatrix}
         \mu_{++}& -\mu_{-+}^*\\
         \mu_{-+}&\mu_{--}
       \end{pmatrix}\in L(\H),
\end{equation}
where operators $P_+\mu P_+ = \mu_{++}\in L(\H_+)$ and $P_-\mu P_- = \mu_{--}\in L(\H_-)$ are skew-hermitian. 
The following family of homogeneous polynomials was first introduced in \cite{GO-grass}
\begin{equation}\label{def_H}
  H_k^n(\mu):=\!\!\!\!\!\!\!\!\!\sum_{i_0,i_1,\ldots i_n\in\{0,1\} \atop i_0+\ldots +i_n=k}\!\!\!\!\!\!\!\!\! P_+^{i_0}\mu P_+^{i_1}\mu\ldots \mu P_+^{i_n},
\end{equation}
where $H^n_k$ is of the degree $n\in\N$ in the operator variable $\mu$ and degree $k$ in the projector $P_+$, with $k\leq n+1$. 
The following hierarchy of Lax equations is under consideration in this paper
\begin{equation}\label{H-eq}\frac\partial{\partial t_k^n}\mu = i^{n+1}
  [\mu, H_k^n(\mu)],
\end{equation}
where $n\in\N$ and $k=1,\ldots, n+1$. 

\begin{rem}\label{rem}
Even though it is not easily seen directly from the form \eqref{H-eq} of the equations, the flows corresponding to each equation of the hierarchy commute and the underlying geometric structure guarantees that the diagonal blocks $\mu_{++}$ and $\mu_{--}$ of the operator $\mu$ are constant
\begin{equation}\label{diag-const}\frac\partial{\partial t_k^n}\mu_{++} = 0 \qquad \qquad \frac\partial{\partial t_k^n}\mu_{--} = 0,\end{equation}
see e.g. \cite{GO-grass} or \cite{GT-momentum}.
Moreover the traces of the operators $H^n_k$ are integrals of motion with respect to all times.
\end{rem}

\subsection{Equations on the space of partial isometries from $\H_+$ to $\H_-$}

Let us recall that a partial isometry in a Hilbert space is an isometry from the orthogonal complement of its kernel (called the initial space) onto some Hilbert subspace (called the final subspace). It can be alternatively defined by any of the equivalent conditions: 
\begin{itemize}
 \item $u^*uu^*=u^*$, 
 \item $uu^*u=u$,
 \item $u^*u$ is an orthogonal projector (onto the initial space),
 \item $uu^*$ is an orthogonal projector (onto the final space). 
\end{itemize}
Unlike the isometries, they do not constitute a group but a groupoid, which also possesses a structure of Banach Lie groupoid, see \cite{OS,GJS-partiso}.

Since the diagonal blocks $\mu_{++}$ and $\mu_{--}$ of a skew-hermitian matrix $\mu$ are constant along the flows \eqref{H-eq} (see Remark~\ref{rem}),
we will investigate in the remainder of the paper the situation where additionally 
\begin{equation}\label{mupp0}
  \mu_{++}=0.\end{equation} 
In paper \cite{GT-momentum} it was demonstrated that under this assumption the following holds:
\begin{equation}\frac\partial{\partial t^n_k}\,(\mu_{+-}\mu_{-+}) = 0.\end{equation}
Thus the modulus $\abs{\mu_{-+}} = \sqrt{\mu_{+-}\mu_{-+}}$ is constant along the flows for all $t^n_k$, $n\in \N$, $k\leq n+1$. 
In consequence one can consider the polar decomposition of the operator $\mu_{-+}$ of the form
\begin{equation}\mu_{-+} = u B,\end{equation}
where $B:=\abs{\mu_{-+}}$ is fixed and $u:\H_+\to\H_-$ is a partial isometry with the final space $\overline{\im \mu_{-+}}$ and the initial space $(\ker \mu_{-+})^\perp = (\ker B)^\perp$. To simplify the equations let us introduce the notation $D=\mu_{--}$ so that the block decomposition of the matrix $\mu$ is under our assumptions of the form
\begin{equation}\label{block}
  \mu:=\begin{pmatrix}
         0 & -Bu^*\\
         uB&D
       \end{pmatrix}\in L(\H).
\end{equation}

As $\mu$ evolves under one of the equations in the hierarchy \eqref{H-eq}, the time-evolution of $u$ needs not to be continuous in general. However in the finite dimensional case $B$ is always partially invertible (i.e. there exists an operator $C$ such that $BC=u^*u$). By direct calculations using recurrence relations (see \cite{GT-momentum}) we obtain the following system of equations on the partial isometry~$u$:
\begin{equation}\label{eq-partiso}
  \frac\partial{\partial t^n_k} u = i^{n+1}(\mu H^{n-1}_{k-1}(\mu))_{--}\,u.
\end{equation}
Note that since the initial space is fixed, these evolutions preserve the projector $uu^*$.

\begin{prop}
For $k> n/2+1$ the right hand side of equation \eqref{eq-partiso} vanishes. In consequence it is sufficient to restrict our considerations to $t^n_k$ for $k\leq n/2+1$. 
\end{prop}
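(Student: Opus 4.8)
The plan is to reduce the statement to a purely combinatorial count on the monomials making up $(\mu H^{n-1}_{k-1}(\mu))_{--}$. Writing out the definition, this operator is
\[
(\mu H^{n-1}_{k-1}(\mu))_{--} = \sum_{i_0,\ldots,i_{n-1}\in\{0,1\} \atop i_0+\cdots+i_{n-1}=k-1} P_- \,\mu\, P_+^{i_0}\,\mu\, P_+^{i_1}\cdots \mu\, P_+^{i_{n-1}}\, P_-,
\]
a sum of monomials each containing exactly $n$ factors $\mu$ and $k-1$ active projectors $P_+$, so it suffices to show that every individual monomial vanishes as soon as $k-1>n/2$. The only structural input I would use is the standing assumption $\mu_{++}=0$, i.e. $P_+\mu P_+=0$, equivalently $\mu(\H_+)\subseteq\H_-$.

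Next I would expand each of the $n$ factors $\mu$ in a fixed monomial into its four blocks relative to $\H=\H_+\oplus\H_-$; the monomial becomes a sum of products of blocks, and such a product survives only if consecutive blocks are composable. I would encode each surviving product by the string $(c_0,c_1,\ldots,c_n)\in\{+,-\}^{n+1}$ recording the subspace index at each of the $n+1$ junctions between factors, with the two outer projectors as endpoints. The constraints then become transparent: the outer projectors $P_-$ force $c_0=c_n=-$; the vanishing block $\mu_{++}=0$ forbids two consecutive entries from both being $+$; and the inserted $P_+^{i_{j-1}}$ forces $c_j=+$ whenever $i_{j-1}=1$ (in particular the last projector must be trivial, $i_{n-1}=0$, since $c_n=-$). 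Hence a monomial can be nonzero only if there exists an admissible junction string, and in particular only if at least $k-1=\sum_j i_{j-1}$ of the interior entries $c_1,\ldots,c_{n-1}$ equal $+$.

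Finally I would carry out the count. Among the $n-1$ interior positions of a binary string with $-$'s forced at both ends and no two consecutive $+$'s, the maximal number of entries equal to $+$ is $\lceil(n-1)/2\rceil=\lfloor n/2\rfloor$. Therefore, if $k-1>\lfloor n/2\rfloor$ no admissible string exists, so every monomial, and hence the whole operator $(\mu H^{n-1}_{k-1}(\mu))_{--}$, vanishes; multiplying by $u$ then kills the right hand side of \eqref{eq-partiso}. To conclude I would check that the hypothesis is exactly this condition: for integer $k$ one has $k>n/2+1\iff k-1>n/2\iff k-1>\lfloor n/2\rfloor$, treating $n$ even and odd separately. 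The main obstacle I anticipate is not the counting itself but the index bookkeeping — matching each projector exponent $i_{j-1}$ to the correct junction $c_j$ and keeping the endpoint conventions consistent — after which the extremal estimate for non-consecutive $+$'s and the parity comparison are routine.
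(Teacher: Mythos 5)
Your argument is correct and is essentially the paper's own proof in a more formal dress: both rest on the observations that $\mu_{++}=0$ forbids two adjacent $P_+$'s, that the outer $P_-$ in $(\cdot)_{--}$ kills any term with $P_+$ in the last slot, and then on counting the maximal number of non-adjacent $P_+$'s, namely $\lceil (n-1)/2\rceil=\lfloor n/2\rfloor$. Your junction-string bookkeeping and the single uniform bound merely replace the paper's informal phrasing and its parity case split on $n-1$, so there is nothing to add.
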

\begin{proof}
  It follows from the condition \eqref{mupp0} that in the definition
  \eqref{def_H} of $H^{n-1}_{k-1}$ we cannot put too many projectors
  $P_+$ without having the expression
  $$P_+^{i_0}\mu P_+^{i_1}\mu\ldots \mu P_+^{i_n}$$ vanish. Thus at
  least every other $i_j$ should be zero. Moreover, terms in
  $H^{n-1}_{k-1}$ with $P_+$ at the end also vanish due to the
  presence of the projector $P_-$ in the equation
  \eqref{eq-partiso}. If $n-1$ is even then the maximal number of
  operators $P_+$ is $(n-1)/2$ which need to be put between groups of
  $\mu^2$ or at the beginning. Similarly if $n-1$ is odd then we get
  $n/2-1$ places for $P_+$ between groups of $\mu^2$ and we can place
  the remaining $\mu$ at the end gaining one extra place for $P_+$.
\end{proof}

Let us write down explicitly a few of the equations from this
hierarchy. Equations for $n=1$ and $n=2$ are linear:
\begin{align}\frac\partial{\partial t^1_1} u &= -Du,\\
  \frac\partial{\partial t^2_1} u &= i(uB^2-D^2u).
\end{align}
The first non-linear equation in this system is obtained for $n=3$ and $k=1$:
\begin{equation}\label{eq_3_1}\frac\partial{\partial t^3_1} u = -DuB^2-uB^2u^*Du+D^3u
\end{equation}
and for $n=3$ and $k=2$:
\begin{equation}\label{eq_3_2}\frac\partial{\partial t^3_2} u = -DuB^2-uB^2u^*Du.\end{equation}
and for $n=4$ and $k=2$:
\begin{equation}\label{eq_4_2}\frac\partial{\partial t^4_2} u = i(2uB^4-D^2uB^2-uB^2u^*D^2u-DuB^2u^*Du).
\end{equation}
In next sections we present particular solutions to these equations.

\section{2+2 dimensional case}

In this section we consider a toy model with $N=M=2$, so $\H=\C^2\times\C^2$. 
For the sake of simplicity let us take $u:\C^2\to\C^2$ as a partial isometry with one dimensional initial space.
By a change of basis in $\H_+=\C^2$, the initial space can be chosen as $\C\times\{0\}$ and then $u$ is of the form
\begin{equation}u =
  \begin{pmatrix}
    \alpha & 0  \\
    \beta& 0 
  \end{pmatrix}
\end{equation}
for some $\alpha,\beta\in\C$ satifying $\abs\alpha^2+\abs\beta^2 = 1$. Since $B$ is supposed to be a positive $2\times2$ matrix with image equal to the initial space of $u$, it is necessary of the form
\begin{equation}B =
  \begin{pmatrix}
  b & 0  \\
  0 & 0 
  \end{pmatrix},
\end{equation}
where $\R\ni b > 0$.
By choosing appropriate basis in $\H_-=\C^2$ we can always consider the matrix $D$ in the diagonal form
\begin{equation}D =
  \begin{pmatrix}
  d_1 & 0  \\
  0 & d_2 
  \end{pmatrix},
\end{equation}
where $d_1,d_2\in i\R$.

In this setting the matrix equation \eqref{eq_3_1} becomes a system of two scalar equations for unknown complex-valued functions $\alpha$ and $\beta$ depending on the constants $b, d_1, d_2$:
\begin{equation}
  \begin{cases}
    \frac\partial{\partial t^3_1} \alpha = -(b^2 d_1 + b^2 d_1 \abs\alpha^2 + b^2 d_2 \abs\beta^2 -d_1^3)\alpha\\
    \frac\partial{\partial t^3_1} \beta = -(b^2 d_2 + b^2 d_1 \abs\alpha^2 + b^2 d_2 \abs\beta^2 -d_2^3)\beta
  \end{cases},
\end{equation}
together with the constraint $\abs\alpha^2+\abs\beta^2 = 1$. Using this constraint these equations can be easily decoupled and assume the form
\begin{equation}\label{eq_2x2_3_1}
  \begin{cases}
\frac\partial{\partial t^3_1} \alpha = i(A_1 + A_2 \abs\alpha^2)\alpha\\
\frac\partial{\partial t^3_1} \beta = i(A_3 + A_4 \abs\beta^2)\beta
  \end{cases},
\end{equation}
where $A_1 = i(b^2 d_1 + b^2 d_2 -d_1^3)$, $A_2 = i(b^2 d_1 - b^2 d_2)$, $A_3 = i(b^2 d_2 + b^2 d_1 -d_2^3)$, $A_4 = i(-b^2 d_1  + b^2 d_2)$ are real constants.
Since the coefficients on the right hand side of those equations in front of the variable is pure imaginary, they can be easily solved using polar coordinates $\alpha = re^{i\varphi}$. In this way the first equation decomposes into 
\begin{equation}
  \begin{cases}
    \frac\partial{\partial t^3_1} r = 0, \\
    \frac\partial{\partial t^3_1} \varphi = A_1+A_2r^2.
  \end{cases}
\end{equation}
The solution is thus $\alpha = \alpha_0 e^{i(A_1+A_2\abs{\alpha_0}^2)t^3_1}$ and similarly $\beta = \beta_0 e^{i(A_3+A_4\abs{\beta_0}^2)t^3_1}$.
Note that the final space of $u$ is $\C{\alpha\choose\beta}$ and in the general case it changes by rotating $\alpha$ and $\beta$ with different but constant velocities. However in a special case if the final space 
was either $\C\times\{0\}$ or $\{0\}\times\C$, it is preserved by this flow. 

Let us write down explicitly another equation from the hierarchy, for a higher value of $n=4$. In this setting the matrix equation \eqref{eq_4_2} assumes the form:
\begin{equation}
  \begin{cases}
    \frac\partial{\partial t^4_2} \alpha = i(2b^4-b^2 d_1^2 - 2 b^2 d_1^2 \abs\alpha^2 - b^2 (d_1 d_2 +d_2^2 )\abs\beta^2)\alpha\\
    \frac\partial{\partial t^4_2} \beta = i(2b^4-b^2 d_2^2 - 2 b^2 d_2^2 \abs\beta^2 - b^2 (d_1 d_2 +d_1^2 )\abs\alpha^2)\beta
  \end{cases}.
\end{equation}
They are of the same type as equations \eqref{eq_2x2_3_1} and can be solved by the same approach.


Looking at the form of the equations we obtain so far one can formulate the following observation:
\begin{prop}\label{prop_2+2}
The equations for $\alpha$ and $\beta$ with respect to the arbitrary time $t^n_k$ are of the form:
\begin{equation}\label{eq-alpha-beta}
  \begin{cases}
    \frac\partial{\partial t^n_k} \alpha = i p_1(\abs{\alpha}^2,\abs{\beta}^2) \alpha\\
    \frac\partial{\partial t^n_k} \beta = i p_2(\abs{\alpha}^2,\abs{\beta}^2) \beta
  \end{cases},
\end{equation}
where $p_1,p_2$ are polynomials with real coefficients depending on parameters $b$, $d_1$ and $d_2$.
\end{prop}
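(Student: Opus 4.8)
The plan is to read \eqref{eq-partiso} not as an equation on the matrix $u$ but as a linear equation on the vector spanning its final space. Writing $u=v\,e_1^{*}$, where $e_1\in\H_+$ is the first basis vector and $v\in\H_-$ is the vector with components $\alpha,\beta$, and setting $W:=i^{n+1}\bigl(\mu H^{n-1}_{k-1}(\mu)\bigr)_{--}$, equation \eqref{eq-partiso} becomes $\partial_{t^n_k} v = W v$. Here $W$ is a noncommutative polynomial with integer coefficients in the blocks of $\mu$ from \eqref{block}, so each entry of $W$, and hence each of $\partial_{t^n_k}\alpha$ and $\partial_{t^n_k}\beta$, is an ordinary polynomial in $\alpha,\bar\alpha,\beta,\bar\beta$ and in the parameters $b,d_1,d_2$. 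The proposition then splits into two structural claims about these polynomials: that they are proportional to $\alpha$ (resp.\ $\beta$) with a coefficient depending only on $\abs\alpha^2,\abs\beta^2$, and that this coefficient is purely imaginary.

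First I would establish the form using a torus symmetry. Conjugation by a block–diagonal diagonal unitary $\Phi=\mathrm{diag}(e^{i\phi_1},e^{i\phi_2},e^{i\theta_1},e^{i\theta_2})$ commutes with $P_+$, hence, since $H^n_k$ is built only from $\mu$ and $P_+$, it sends solutions of the hierarchy to solutions; because $D$ is diagonal and $B=\mathrm{diag}(b,0)$, it preserves the block form \eqref{block} with $B,D$ fixed and acts on the data by $(\alpha,\beta)\mapsto(e^{i(\theta_1-\phi_1)}\alpha,\ e^{i(\theta_2-\phi_1)}\beta)$, i.e.\ by an arbitrary pair of independent phases. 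Equivariance of \eqref{eq-partiso} forces $\partial_{t^n_k}\alpha$ to be homogeneous of weight $(1,0)$ and $\partial_{t^n_k}\beta$ of weight $(0,1)$ under this $T^2$–action; since a monomial $\alpha^{a}\bar\alpha^{a'}\beta^{c}\bar\beta^{c'}$ has weight $(a-a',c-c')$, weight $(1,0)$ leaves exactly the monomials $\alpha\,\abs\alpha^{2a'}\abs\beta^{2c'}$. This yields $\partial_{t^n_k}\alpha=q_1(\abs\alpha^2,\abs\beta^2)\,\alpha$ and $\partial_{t^n_k}\beta=q_2(\abs\alpha^2,\abs\beta^2)\,\beta$ for polynomials $q_1,q_2$.

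It remains to show that $q_1,q_2$ are purely imaginary, and this is the step I expect to be the main obstacle: $W$ is in general not diagonal (already in \eqref{eq_3_1} one finds $W_{12}\propto\alpha\bar\beta\,d_2$), so the off–diagonal entries of $W$ contribute to $q_1,q_2$ and one must show their real part vanishes. I would combine two observations. Entrywise complex conjugation is multiplicative and fixes $P_+$, so $\overline{H^m_l(\mu)}=H^m_l(\bar\mu)$; moreover $\bar B=B$ and, since $d_1,d_2\in i\R$, $\bar D=-D$, whence $\bar\mu$ is again of the form \eqref{block} with data $(\bar v;b,-D)$. Tracking the scalar $\overline{i^{n+1}}=(-i)^{n+1}$ then gives $\overline{W(v;b,D)}=(-1)^{n+1}W(\bar v;b,-D)$. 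Independently, a parity count gives $W(\cdot;b,-D)=(-1)^{n}W(\cdot;b,D)$: each nonzero term of $\bigl(\mu H^{n-1}_{k-1}\bigr)_{--}$ is a product $P_-\mu\cdots\mu P_-$ of $n$ factors of $\mu$ which, because $\mu_{++}=0$, can visit $\H_+$ only in isolated up–down excursions, so it carries an even number of off–diagonal ($b$–type) factors and a number of diagonal ($D$–carrying) factors congruent to $n$ modulo $2$; the delicate point is precisely this combinatorial count.

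Multiplying the two identities I obtain $\overline{W(v;b,D)}=-W(\bar v;b,D)$. Pairing with $\bar v$ and taking the first component, using $\overline{W}\,\bar v=\overline{Wv}$ together with the fact that $q_1$ depends only on the moduli $\abs\alpha^2=\abs{\bar\alpha}^2$ and $\abs\beta^2=\abs{\bar\beta}^2$, I get $\overline{q_1}\,\bar\alpha=-q_1\,\bar\alpha$, hence $\overline{q_1}=-q_1$, and likewise $\overline{q_2}=-q_2$. As polynomial identities in the real variables $\abs\alpha^2,\abs\beta^2$ this forces every coefficient of $q_j$ to be purely imaginary, so $q_j=i\,p_j$ with $p_j$ a polynomial with real coefficients, which is exactly \eqref{eq-alpha-beta}. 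As a by–product $\tfrac{d}{dt}\abs\alpha^2=\tfrac{d}{dt}\abs\beta^2=0$, which explains the conservation of moduli observed in the worked examples.
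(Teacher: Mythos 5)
Your proof is correct, but it takes a genuinely different route from the paper's. The paper argues by direct computation: it expands $(\mu H^{n-1}_{k-1}(\mu))_{--}\,u$ into monomials of the form $D^{i_1}uB^{j_1}u^*\cdot D^{i_2}uB^{j_2}u^*\cdots u$, regroups the factors as $u^*D^suB^r=b^r(d_1^s\abs{\alpha}^2+d_2^s\abs{\beta}^2)\,u^*u$ so that each monomial collapses to a scalar polynomial in $\abs{\alpha}^2,\abs{\beta}^2$ times $D^{i_1}u$ (manifestly of the required form), and then gets reality of $p_1,p_2$ by asserting that the matrix $i^{n+1}(\mu H^{n-1}_{k-1}(\mu))_{--}$ in front of $u$ is skew-Hermitian --- a fact that ultimately rests on the conservation of the diagonal blocks quoted in Remark~\ref{rem}. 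You replace the first step by a $T^2$-equivariance/weight argument and the second by the conjugation identity $\overline{W(v;b,D)}=(-1)^{n+1}W(\bar v;b,-D)$ combined with a parity count of the $D$-factors. Both are sound: the point you flag as delicate --- that a nonzero block-path from $\H_-$ to $\H_-$ of length $n$ avoiding $\H_+\to\H_+$ transitions carries an even number of off-diagonal factors, hence $n\bmod 2$ factors of $D$ --- is correct, and it makes your reality argument self-contained where the paper leans on external structure. What the paper's computation buys in exchange is an explicit closed form for the polynomials $p_j$ and a proof that transfers verbatim to the higher-rank-one setting of Proposition~\ref{prop_arbitrary}. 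One small remark: since your equivariance and conjugation identities are polynomial identities valid for all $v\in\C^2$ (not only on the constraint set $\abs{\alpha}^2+\abs{\beta}^2=1$), the conclusion $\overline{q_j}=-q_j$ does hold coefficientwise as you claim, so no appeal to the constraint is needed there.
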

\begin{proof}
We see that the right hand side of \eqref{eq-partiso} is an operator from $\H_+$ to $\H_-$ and noting that $Bu^*u = B$ we conclude it consists of terms of the form
\begin{multline}\label{gen_term}
i^{n+1} (D^{i_1} u B^{j_1} u^*)\cdot (D^{i_2} u B^{j_2} u^*) \cdot\ldots\cdot (D^{i_l} u B^{j_l} u^*) u 
\\
= i^{n+1} D^{i_1} u B^{j_1} (u^* D^{i_2} u B^{j_2}) \cdot (u^* D^{i_3} u B^{j_3})\cdot\ldots\cdot (u^*D^{i_l} u B^{j_l}) u^* u
\end{multline}
for some $i_1,\ldots i_l, j_1,\ldots j_l\in \{0,1,\ldots\}$.
We can move the parentheses so the matrices form such groups $u^*D^s u B^r$ and observe that they are equal to $b^r(d_1^s \abs\alpha^2+d_2^s\abs\beta^2) u^*u$. In the end we get the expression 
\begin{multline}
  i^{n+1} D^{i_1} u B^{j_1} p(\abs{\alpha}^2,\abs{\beta}^2)u^*u = 
  i^{n+1} b^{j_1} p(\abs{\alpha}^2,\abs{\beta}^2) D^{i_1} u\\
$$=i^{n+1}b^{j_1}p(\abs{\alpha}^2,\abs{\beta}^2)
  \begin{pmatrix}
    d_1^{i_1} & 0  \\
    0 & d_2^{i_1}  
  \end{pmatrix}
  \begin{pmatrix}
    \alpha & 0  \\
    \beta & 0 
  \end{pmatrix}
\end{multline}
for some polynomial $p$, possibly with complex coefficients, where we have used $u u^* u = u$. 
Summing up matrices of that kind we get
\begin{equation}i
  \begin{pmatrix}
    p_1(\abs{\alpha}^2,\abs{\beta}^2) & 0  \\
    0 & p_2(\abs{\alpha}^2,\abs{\beta}^2)  
  \end{pmatrix}
  \begin{pmatrix}
    \alpha & 0  \\
    \beta & 0 
  \end{pmatrix}.
\end{equation}
It remains to show that the polynomials $p_1$ and $p_2$ indeed have real coefficients. Looking back at equation \eqref{eq-partiso} we observe that the matrix in front of the matrix $u$ on the right hand side is skew hermitian. Thus $p_1$ and $p_2$ are real.
\end{proof}

Using the same method that we used to solve equations \eqref{eq_2x2_3_1} we conclude that the equations \eqref{eq-alpha-beta} decouple and  the solution is the following
\begin{align}
  \alpha &= \alpha_0 e^{ip_1(\abs{\alpha_0}^2,1-\abs{\alpha_0}^2)t^n_k},\\
  \beta &= \beta_0 e^{ip_2(1-\abs{\beta_0}^2,\abs{\beta_0}^2)t^n_k}.
\end{align}

\section{Solution of the equations for rank one partial isometries in arbitrary finite dimension}
Let us go back to a more general case, where the dimensions of $\H_+$ and $\H_-$ are arbitrary, but finite. We will however keep the assumption that the rank of the partial isometry $u$ is equal to 1. By changing independently the basis in $\H_+$ and $\H_-$ in an appropriate manner, we can assume again that both $B$ and $D$ are diagonal and that the initial space of $u$ is spanned by the first basis vector of $\H_+$. In consequence the partial isometry $u: \H_+ \rightarrow \H_-$ is of the form:
\begin{equation}u=
  \begin{pmatrix}
    \alpha_1 & 0 & \ldots & 0  \\
    \vdots & \vdots & \ddots & \vdots \\
    \alpha_M & 0 & \ldots & 0 
  \end{pmatrix},
\end{equation}
where $\abs{\alpha_1}^2+\ldots \abs{\alpha_M}^2 = 1$.
One easily observes that  Proposition \ref{prop_2+2} generalizes to this setting.

\begin{prop}\label{prop_arbitrary}
The equations for the evolution of the coefficients $\alpha_1,\ldots \alpha_M$ with respect to the arbitrary time $t^n_k$ are of the form:
\begin{equation}\label{eq-alpha-beta-arbitrary}
  \frac\partial{\partial t^n_k} \alpha_j = i p_{j,k}^n(\abs{\alpha_1}^2,\ldots,\abs{\alpha_M}^2) \alpha_j,
\end{equation}
where $p_{j,k}^n$ are polynomials with real coefficients depending smoothly on the eigenvalues of the matrices $B$ and $D$.
\end{prop}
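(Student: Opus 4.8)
The plan is to run the same argument as in the proof of Proposition~\ref{prop_2+2}, the only genuine change being the elementary identity that makes each monomial collapse. Starting from equation~\eqref{eq-partiso} together with the block form~\eqref{block} (where $\mu_{++}=0$ by \eqref{mupp0}, and $B$, $D$ are diagonal), the right-hand side is an operator from $\H_+$ to $\H_-$ assembled from the same alternating pattern of $\mu$'s and projectors $P_+$ appearing in \eqref{def_H}. Exactly as in \eqref{gen_term}, after inserting $Bu^*u=B$ and regrouping the parentheses, each contributing monomial acquires the form
\[
i^{n+1}\, D^{i_1} u B^{j_1}\,(u^* D^{i_2} u B^{j_2})\cdots(u^* D^{i_l} u B^{j_l})\, u^* u,
\]
so that the entire computation reduces to understanding the building blocks $u^* D^s u\, B^r$.

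First I would establish the collapse identity in the rank-one case. Since $u$ has a single nonzero column with entries $\alpha_1,\ldots,\alpha_M$, the initial-space projector $u^*u$ is the rank-one projector onto the first basis vector of $\H_+$; and because $D=\mathrm{diag}(d_1,\ldots,d_M)$ and $B=\mathrm{diag}(b,0,\ldots,0)$, a direct computation gives
\[
u^* D^s u\, B^r \;=\; b^r\Bigl(\sum_{j=1}^M d_j^s\,\abs{\alpha_j}^2\Bigr)\,u^* u .
\]
This is the exact analogue of the $2+2$ identity $u^* D^s u\, B^r = b^r(d_1^s\abs{\alpha}^2+d_2^s\abs{\beta}^2)u^*u$, and it is the only place where the rank-one hypothesis enters. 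Since every block $u^* D^s u\, B^r$ is a scalar multiple of the single projector $u^*u$, the product of all blocks in the monomial above is again a scalar multiple of $u^*u$.

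Feeding this back and using $uu^*u=u$, each monomial collapses to $i^{n+1} b^{j_1} p(\abs{\alpha_1}^2,\ldots,\abs{\alpha_M}^2)\,D^{i_1} u$ for some polynomial $p$. The crucial point is that the prefactor $D^{i_1}$ is diagonal, so summing the monomials shows that the whole operator multiplying $u$ on the right-hand side of \eqref{eq-partiso} is a diagonal $M\times M$ matrix $\mathrm{diag}(ip^n_{1,k},\ldots,ip^n_{M,k})$, whose $(j,j)$ entry is built from products of $b^r$ and $d_j^s$. This diagonality is precisely what decouples the system into the scalar equations $\frac{\partial}{\partial t^n_k}\alpha_j = i p^n_{j,k}(\abs{\alpha_1}^2,\ldots,\abs{\alpha_M}^2)\,\alpha_j$ of \eqref{eq-alpha-beta-arbitrary}, and since the coefficients are polynomial in $b^r$ and $d_j^s$ they depend polynomially, hence smoothly, on the eigenvalues of $B$ and $D$. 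Reality of $p^n_{j,k}$ follows just as in the $2+2$ case: the operator $i^{n+1}(\mu H^{n-1}_{k-1}(\mu))_{--}$ multiplying $u$ is skew-hermitian, so this diagonal matrix has purely imaginary entries $ip^n_{j,k}$, forcing each $p^n_{j,k}$ to be real.

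I do not anticipate any serious obstacle, since the combinatorial structure of the monomials in $H^{n-1}_{k-1}$ is identical to the $2+2$ situation and the regrouping in \eqref{gen_term} carries over verbatim. The one point deserving care is the bookkeeping of the general rank-one collapse identity: one must verify that every factor $u^* D^s u\, B^r$ really reduces to a scalar multiple of $u^*u$ rather than a genuinely larger operator, and that the leading diagonal factor $D^{i_1}$ survives intact. Once this is checked, the decoupled form \eqref{eq-alpha-beta-arbitrary}, the smooth dependence on the eigenvalues, and the reality of the coefficients all follow at once.
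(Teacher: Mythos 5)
Your proposal is correct and follows exactly the route the paper takes: the paper's own proof simply declares the argument "completely analogous" to Proposition~\ref{prop_2+2}, invoking the validity of \eqref{gen_term} and the fact that $B$ has a single nonzero eigenvalue, and your write-up is precisely that analogy carried out in detail (the collapse identity $u^*D^su\,B^r=b^r\bigl(\sum_j d_j^s\abs{\alpha_j}^2\bigr)u^*u$, the surviving diagonal factor $D^{i_1}$, and reality via skew-hermiticity). No gaps; if anything, your version is more explicit than the paper's.
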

\begin{proof}
The proof is completely analogous to the proof of Proposition \ref{prop_2+2}. The formula  \eqref{gen_term} is still valid and since $B$ has only one non-zero eigenvalue, the rest of the argument carries over after replacing matrices with higher dimensional ones.
\end{proof}

In this case the condition $\abs{\alpha_1}^2+\ldots +\abs{\alpha_M}^2 = 1$ is not sufficient to decouple the equations if $\dim \H_->2$. It might still be possible to do so by additionally using the integrals of motion $\Tr H^n_k$. However one can still use the same approach as in the 2+2 case to solve the equations without decoupling them first.

\begin{thm}\label{thm}
The solution to \eqref{eq-partiso} for the case of partial isometries of rank one is the following
\begin{equation}\label{solution}
  \alpha_j(t_1^1,t_1^2,t_2^2,\ldots) = \alpha_j^0 \exp\left({i\sum\limits_{\tiny{n,k\leq n/2+1}} p_{j,k}^n(\abs{\alpha_1^0}^2,\ldots,\abs{\alpha_M^0}^2)t^n_k}\right),
\end{equation}
where $\alpha_j^0\in\C$ are initial values.
\end{thm}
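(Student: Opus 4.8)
The plan is to treat \eqref{solution} as an ansatz and to verify directly that it solves the entire hierarchy simultaneously; the whole argument rests on one structural observation, namely that the moduli $\abs{\alpha_j}$ are invariants of every flow. First I would record the shape of the equations supplied by Proposition~\ref{prop_arbitrary}: each reads $\frac{\partial}{\partial t^n_k}\alpha_j = i\, p_{j,k}^n(\abs{\alpha_1}^2,\ldots,\abs{\alpha_M}^2)\,\alpha_j$ with the $p_{j,k}^n$ \emph{real}. The crucial point is that the right hand side is $i$ times a real quantity times $\alpha_j$, so differentiating $\abs{\alpha_j}^2 = \alpha_j\overline{\alpha_j}$ gives $\frac{\partial}{\partial t^n_k}\abs{\alpha_j}^2 = i\,p_{j,k}^n\abs{\alpha_j}^2 - i\,p_{j,k}^n\abs{\alpha_j}^2 = 0$. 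Hence every modulus $\abs{\alpha_j}$ is constant along every flow of the hierarchy.

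Next I would use this invariance to freeze the coefficients. Since all of $\abs{\alpha_1}^2,\ldots,\abs{\alpha_M}^2$ are preserved by every $t^n_k$-flow, along the joint evolution they remain pinned at their initial values $\abs{\alpha_j^0}^2$. Consequently each coefficient $p_{j,k}^n(\abs{\alpha_1}^2,\ldots,\abs{\alpha_M}^2)$ is a genuine constant, equal to $c^n_{j,k} := p_{j,k}^n(\abs{\alpha_1^0}^2,\ldots,\abs{\alpha_M^0}^2)\in\R$, throughout the motion. Each equation thereby degenerates to a scalar linear ODE $\frac{\partial}{\partial t^n_k}\alpha_j = i\,c^n_{j,k}\,\alpha_j$ with fixed frequency, whose solution is the exponential factor $\exp(i\,c^n_{j,k}\,t^n_k)$. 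For $k>n/2+1$ the right hand side of \eqref{eq-partiso} vanishes by the earlier Proposition, so $c^n_{j,k}=0$ and those times act trivially, which is exactly why the sum in \eqref{solution} is restricted to $k\leq n/2+1$.

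Finally I would assemble the multi-time solution and check consistency. Because the frequencies $c^n_{j,k}$ are constants independent of all the times, the product of the individual exponentials is precisely \eqref{solution}. To confirm it solves the full system at once I would differentiate \eqref{solution} with respect to an arbitrary $t^n_k$: the exponent is purely imaginary, since the $p_{j,k}^n$ are real and evaluated at the real constants $\abs{\alpha_j^0}^2$, so the formula itself yields $\abs{\alpha_j(t)} = \abs{\alpha_j^0}$ for all times, and the derivative returns $i\,c^n_{j,k}\,\alpha_j = i\,p_{j,k}^n(\abs{\alpha_1}^2,\ldots,\abs{\alpha_M}^2)\,\alpha_j$, matching Proposition~\ref{prop_arbitrary}; setting all times to zero gives $\alpha_j^0$. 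The only genuine point to handle is the mutual compatibility of the many partial derivatives, but this is the commutativity of the flows guaranteed in Remark~\ref{rem}, and here it becomes transparent: freezing the moduli turns every flow into a constant-frequency rotation, so the mixed partials automatically agree. I expect this compatibility bookkeeping, rather than any analytic difficulty, to be the only thing that needs care.
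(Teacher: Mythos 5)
Your argument is correct and follows essentially the same route as the paper: the paper passes to polar coordinates $\alpha_j = r_j e^{i\varphi_j}$ and observes that $r_j$ is constant while $\varphi_j$ evolves linearly, which is exactly your observation that $\abs{\alpha_j}^2$ is conserved so each flow degenerates to a constant-frequency rotation. Your explicit check of the mixed-partial compatibility is a useful extra remark but does not change the substance of the proof.
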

\begin{proof}
Using the polar form of the coefficients $\alpha_j = r_j e^{i\varphi_j}$ we obtain from Proposition \ref{prop_arbitrary} the equations in the following form
\begin{equation}
  \begin{cases}
    \frac\partial{\partial t^n_k} r_j = 0 \\
    \frac\partial{\partial t^n_k} \varphi_j = p_{j,k}^n(r_1^2,\ldots,r_M^2)
  \end{cases},
\end{equation}
whose solution is clearly \eqref{solution}.
\end{proof}

\section{Conclusion}
In this paper we considered a family of commuting equations of motions given in Lax form by \eqref{H-eq}. The diagonal blocks $\mu_{++}$ and $\mu_{--}$ being preserved by the flows, we investigate in more details the case where $\mu_{++} = 0$ which leads to equations of motion on the space of partial isometries $u$ from $\H_+$ to $\H_-$. 
The case of partial isometries with rank one is completely solved in Theorem~\ref{thm}.
It remains an open problem what happens in a case when $u$ has a higher rank or even when it is unitary. It will be a subject of a separate study.

\end{document}